\DeclareMathAlphabet{\mathpzc}{OT1}{pzc}{m}{it}
\DeclareFontFamily{U}{matha}{\hyphenchar\font45}
\DeclareFontShape{U}{matha}{m}{n}{
      <5> <6> <7> <8> <9> <10> gen * matha
      <10.95> matha10 <12> <14.4> <17.28> <20.74> <24.88> matha12
      }{}
\DeclareSymbolFont{matha}{U}{matha}{m}{n}
\DeclareMathSymbol{\pm}            {2}{matha}{"08}
\DeclareMathSymbol{\mp}            {2}{matha}{"09}
\DeclareMathSymbol{\varleftarrow}{3}{matha}{"D0}
\DeclareMathSymbol{\varrightarrow}{3}{matha}{"D1}
\DeclareMathSymbol{\vee}           {2}{matha}{"5F}
\DeclareMathSymbol{\wedge}         {2}{matha}{"5E}
\DeclareMathSymbol{\leq}         {3}{matha}{"A4}
\DeclareMathSymbol{\geq}         {3}{matha}{"A5}
\DeclareMathSymbol{\in}            {3}{matha}{"50}
\DeclareMathSymbol{\owns}          {3}{matha}{"51}
\DeclareRobustCommand{\Lcorner}{\mathbin{\mspace{1mu}\text{\L@corner}\mspace{1mu}}}
\newcommand{\L@corner}{%
  \setlength{\unitlength}{\fontcharht\font`T}%
  \begin{picture}(0.8,0)
  \roundcap
  \Line(0.1,0.95)(0.1,0.05)
  \Line(0.1,0.05)(0.7,0.05)
  \end{picture}%
}
\DeclareRobustCommand{\Tri}{\mathbin{\mspace{1mu}\text{\L@corneer}\mspace{1mu}}}
\newcommand{\L@corneer}{%
  \setlength{\unitlength}{\fontcharht\font`T}%
  \begin{picture}(0.8,0)
  \roundcap
  \Line(0.1,0.052)(1.0,0.052)
  \end{picture}%
}
\DeclareRobustCommand{\Irt}{\mathbin{\mspace{1mu}\text{\L@corneeer}\mspace{1mu}}}
\newcommand{\L@corneeer}{%
  \setlength{\unitlength}{\fontcharht\font`T}%
  \begin{picture}(0.8,0)
  \roundcap
  \Line(0.1,0.82)(1.0,0.82)
  \end{picture}%
}
\newcommand{\thickhline}{%
    \noalign {\ifnum 0=`}\fi \hrule height 1pt
    \futurelet \reserved@a \@xhline
}
\newcolumntype{'}{@{\hskip\tabcolsep\vrule width 1pt\hskip\tabcolsep}}
\newcolumntype{"}{@{\hskip\tabcolsep\vrule width 1.5pt\hskip\tabcolsep}}
\newcommand{\scr}{\mathscr}
\def\boxit#1{\vbox{\hrule\hbox{\vrule\kern3pt
             \vbox{\kern3pt#1\kern3pt}\kern3pt\vrule}\hrule}}
\newcommand{\beq}{\begin{equation}}
\newcommand{\beqn}{\begin{equation*}}
\newcommand{\eeq}{\end{equation}}
\newcommand{\eeqn}{\end{equation*}}
\newcommand{\beqa}{\begin{eqnarray}}
\newcommand{\beqan}{\begin{eqnarray*}}
\newcommand{\eeqa}{\end{eqnarray}}
\newcommand{\eeqan}{\end{eqnarray*}}
\newcommand{\bdm}{\begin{displaymath}}
\newcommand{\edm}{\end{displaymath}}
\newcommand{\ba}{\begin{array}}
\newcommand{\ea}{\end{array}}
\newcommand\nn{\nonumber}
\newcommand\benu{\begin{enumerate}}
\newcommand\eenu{\end{enumerate}}
\newcommand\bit{\begin{itemize}}
\newcommand\eit{\end{itemize}}
\def\dim{\mathrm{dim\,}}
\def\der'{\mathfrak{der}'\,}
\def\der{\mathfrak{der}\,}
\def\str'{\mathfrak{str}'\,}
\def\str{\mathfrak{str}\,}
\newcommand{\de}{\delta}
\newcommand{\blb}
{\text{$\llbracket$\hspace{-4pt}\scalebox{0.99}{$|$}\hspace{-2.58pt}\scalebox{0.99}{$|$}\hspace{-2.58pt}\scalebox{0.99}{$|$}}}
\newcommand{\brb}
{\text{$\rrbracket$\hspace{-4pt}\scalebox{0.99}{$|$}\hspace{-2.58pt}\scalebox{0.99}{$|$}\hspace{-2.58pt}\scalebox{0.99}{$|$}}}
\def\fg{{\mathfrak g}}
\def\*{\partial}
\newcommand{\cms}{{\sf a}}
\newtheorem{theorem}{Theorem}[section]
\theoremstyle{definition}
\theoremstyle{remark}
\numberwithin{equation}{section}
\begin{document}

\title{Non-associative structures in extended geometry}

\author{Martin Cederwall}
\address{Department of Physics, Chalmers University of Technology, SE-412~96 Gothenburg, Sweden}
\email{martin.cederwall@chalmers.se}

\author{Jakob Palmkvist}
\address
{Department of Mathematical Sciences, Chalmers University of Technology, SE-412~96 Gothenburg, Sweden}
\email{jakob.palmkvist@chalmers.se}
\thanks{This note is related to the talks given by the second author at the 
joint AMS-UMI meeting ``New developments in infinite-dimensional Lie algebras,
vertex operator algebras and the monster'' in Palermo in July 2024
(organised by Darlayne Addabbo, Lisa Carbone and Roberto Volpato)
and at 
the mini-workshop ``Infinite-dimensional Kac--Moody Lie algebras in supergravity and M-theory''
at Oberwolfach in November 2024 (organised by Guillaume Bossard, Axel Kleinschmidt and Hermann Nicolai).
The second author would like to thank
the organisers of both workshops
for the invitations 
as well as for helpful questions and comments during the talks.
He would also like to thank Per Bäck, Jonas T.\ Hartwig 
and Jonathan Nilsson for discussions.}

\subjclass[2020]{Primary 17A70; Secondary 17B60, 17B65, 17B66, 17B67, 17B70, 17B81.}
\date{} 


\keywords{Weyl--Clifford algebras, $\mathbb{Z}$-graded superalgebras,
Kac--Moody algebras, non-associative algebras, vector fields, Lie derivative,
M-theory, supergravity, extended geometry}

\begin{abstract}
We consider a generalisation of vector fields on a vector space, where the vector space is generalised to a highest-weight module over a Kac--Moody algebra. The generalised vector field is an element in a non-associative superalgebra defined by the module and the Kac--Moody algebra. Also the Lie derivative of a vector field with respect to another
vector field is generalised and expressed in a simple way in terms of this superalgebra. It reproduces the generalised Lie derivative in the general framework of extended geometry, which in special cases reduces to the one in exceptional field theory, unifying diffeomorphisms with gauge transformations in supergravity theories.
\end{abstract}

\maketitle

\section{Introduction}

\textit{Extended geometry} is an approach to M-theory in physics based on the idea of
``geometrising'' duality symmetries that appear in dimensional reduction of eleven-dimensional
supergravity (the low-energy limit of M-theory) to maximal supergravity in lower dimensions.
This requires
generalisations of
geometrical concepts such as
vector fields and Lie derivatives.
The role of $\mathfrak{sl}(n)$ and its $n$-dimensional highest-weight representation
are in these generalisations played by
any Kac--Moody algebra $\mathfrak{g}$ and any irreducible highest-weight representation over $\mathfrak{g}$
with a
dominant integral weight $\lambda$
as the highest weight
\cite{Cederwall:2017fjm}. 
In the case of \textit{exceptional geometry}
(where $\mathfrak{g}=\mathfrak{e}_r$ and $\lambda$ is the fundamental weight $\Lambda_1$ associated to the vertex 
$1$ in the Dynkin diagram of $\mathfrak{e}_r$, with numbering of vertices as in Figure~\ref{dynkinfig}),
the transformation of a generalised vielbein one-form under a generalised Lie derivative
unifies $r$ of the eleven diffeomorphisms in eleven-dimensional supergravity with
gauge transformations associated to the three-form gauge field \cite{Hull:2007zu,PiresPacheco:2008qik,Coimbra:2011ky,Berman:2012vc,Hohm:2013vpa}.

\begin{figure}[tb]
\begin{picture}(265,65)(70,-5)
\put(73,-10){${\scriptstyle{0}}$}
\put(113,-10){${\scriptstyle{1}}$}
\put(153,-10){${\scriptstyle{2}}$}
\put(202,-10){${\scriptstyle{r-4}}$}
\put(242,-10){${\scriptstyle{r-3}}$}
\put(282,-10){${\scriptstyle{r-2}}$}
\put(322,-10){${\scriptstyle{r-1}}$}
\put(260,45){${\scriptstyle{r}}$}
\thicklines
\multiput(210,10)(40,0){4}{\circle{10}}
\multiput(215,10)(40,0){3}{\line(1,0){30}}
\put(155,10){\circle{10}}
\put(115,10){\circle{10}}
\put(75,10){\circle{10}}
\put(120,10){\line(1,0){30}}
\put(80,10){\line(1,0){30}}
\put(75,10){\line(1,1){3.5}}
\put(75,10){\line(1,-1){3.5}}
\put(75,10){\line(-1,1){3.5}}
\put(75,10){\line(-1,-1){3.5}}
\multiput(160,10)(35,0){2}{\line(1,0){10}}
\multiput(175,10)(10,0){2}{\line(1,0){5}}
\put(250,50){\circle{10}} \put(250,15){\line(0,1){30}}
\end{picture}
\caption{Dynkin diagram of the Borcherds superalgebra $\mathscr B$ in the case
$(\mathfrak{g},\lambda)=(\mathfrak{e}_r,\Lambda_1)$, corresponding to
exceptional geoemtry. Removing the ``grey vertex'' ($\otimes$) yields the Dynkin diagram of $\mathfrak{e}_r$.}
\label{dynkinfig}
\end{figure}

In the present note we will give an expression for the Lie derivative in ordinary geometry,
in the case of polynomial vector fields on a vector space.
It has the form of a derived bracket, $\mathscr L_U V =[dU,V]$, where $d$ is an odd differential
on an associative superalgebra.
The question that we aim to answer in this note is whether
also the generalised Lie derivative can be written as some form of
a derived bracket $\mathscr L_U V =[dU,V]$, although it is known
that generalised vector fields in general do not form a Lie algebra.
We will see that this is possible, but the underlying superalgebra will no longer be associative.

\section{Lie derivatives as derived brackets in a Weyl superalgebra}
Let $A$ be the 
associative superalgebra over a field $\mathbb{K}$
with $2n$ even generators 
$\widetilde E_a,\widetilde F^a$ and $2n$ odd generators $E_a,F^a$ (where $a=1,\ldots,n$)
satisfying the commutation relations
\begin{align}
[\widetilde E_a,\widetilde F^b]=[E_a,F^b]&=\delta_a{}^b\,,
\end{align}
and otherwise commuting.
(We use the notation $[-,-]$ and terms like \textit{commute} with respect to the parity grading,
for example $[E_a,E_b]=E_aE_b+E_bE_a=0$.)
The subalgebra of $A$
generated by
$E_a,F^a$ is a Clifford algebra with split signature $(n,n)$,
as can be seen by changing basis to $E_a\pm F^a$ (and considering these elements as even),
and the even subalgebra
generated by
$\widetilde E_a,\widetilde F^a$ is
a Weyl algebra. Accordingly, we may call the full algebra $A$
a \textit{Weyl--Clifford algebra} or a \textit{Weyl superalgebra}.

A polynomial vector field on $\mathbb{K}^n$
can be considered as an element in $A$ of the form
$U=U^aE_a$ where $U^a \in \mathbb{K}[\widetilde F^1,\ldots,\widetilde F^n]\subset A$.
(Here and below, summation over repeated indices up and down is understood.)
Define an odd differential $d$ 
on $A$ by
$dE_a=\widetilde E_a$ and $d\widetilde F^a=F^a$. 
By this we mean that $d$ is an odd derivation and squares to zero, so $d\widetilde E_a=dF^a=0$.
The Lie derivative of a vector field $V$ parameterised by another vector field
$U$ can then be written
\begin{align}
L_U V= [dU,V]&=[dU^a E_a,V^bE_b]+[U^a dE_a,V^bE_b]\nonumber\\
&=[\partial_c U^a F^c E_a,V^bE_b]+[U^a dE_a,V^bE_b]\nonumber\\
&=\partial_c U^aV^b[F^c E_a,E_b]+U^a[\widetilde E_a,V^b]E_b\nonumber\\
&=\partial_c U^aV^b(-\delta_b{}^c E_a)+U^a\partial_aV^bE_b\nonumber\\
&=(-V^a\partial_aU^b+U^a\partial_aV^b)E_b\,.
\end{align}
It follows that
\begin{align}
L_U(L_V W)&=[dU,[dV,W]]=[[dU,dV],W]+[dV,[dU,W]]\nonumber\\
&=[d[dU,V]]+[dV,[dU,W]]=
L_{L_U V} W + 
L_V(L_U W) \label{cov}
\end{align}
and 
\begin{align}
L_U V = [dU,V]=d[U,V]+[U,dV]=-[dV,U]=- L_V U\,. \label{antisym}
\end{align}

The relations (\ref{cov}) and (\ref{antisym}) show 
that the Lie derivative
can be considered as a bracket in a Lie algebra, where the vector fields are even
elements, although they are odd in the original superalgebra $A$.
Brackets constructed in this way from an original bracket and a differential
are called \textit{derived brackets}
\cite{Kosmann-Schwarzbach}.

The even subspace of $A$ with basis elements
$F^a E_b$ closes under the commutator and form the Lie algebra
$\mathfrak{gl}(n)=\mathfrak{sl}(n)\oplus \mathbb{K}$. 
When $\mathbb{K}=\mathbb{C}$, the simple component $\mathfrak{sl}(n)$
is the finite Kac--Moody algebra
$\mathfrak{a}_{r}$, with $r=n-1$ (and when $\mathbb{K}=\mathbb{R}$, the split real form of $\mathfrak{a}_{r}$).
The generators $E_a,\widetilde E_a,F^a,\widetilde F^a$ of $A$ are basis elements of $n$-dimensional modules over $\mathfrak{a}_{r}$
with highest weight $\Lambda_1$ or lowest weight $-\Lambda_1$.
In the generalisation to extended geometry that we will consider next,
$\mathfrak{a}_{r}$ and $\Lambda_1$ are replaced with
any Kac--Moody algebra $\mathfrak{g}$ and any dominant integral weight $\lambda$ of $\mathfrak{g}$,
with corresponding irreducible highest- and lowest-weight modules $R(\lambda)$ and $\overline{R(\lambda)}$
\cite{Cederwall:2017fjm}.

\section{Generalisation to extended geometry}

\subsection{Preliminaries}

All algebras that we consider in this note are $\mathbb{Z}$-graded superalgebras over $\mathbb{K}$. We assume that
the $\mathbb{Z}$-grading and the $\mathbb{Z}/2\mathbb{Z}$-gradings are compatible with each other,
so that even and odd elements have even and odd $\mathbb{Z}$-degrees, respectively. We denote the
$\mathbb{Z}$-degree of a homogeneous element $x$ by $\deg{x}$, and for an arbitrary element $x$
we write
$x=\sum_{k\in\mathbb{Z}}x_k$,
where $\deg{x_k}=k$.

Let $\mathfrak g$ be a Kac--Moody algebra of rank $r$ with Cartan subalgebra $\mathfrak{h}$ (considering
$\mathfrak g$ as
a $\mathbb{Z}$-graded Lie superalgebra, we have $\deg{\,\mathfrak{g}}=0$) and let $\lambda$
be an arbitrary dominant integral weight of $\mathfrak{g}$. 
We denote the bracket in $\mathfrak{g}$ by $\llbracket-,-\rrbracket$,
reserving $[-,-]$ for the commutator with respect to a product
in an algebra $\scr C$ that we will construct from the pair $(\mathfrak{g},\lambda)$,
and further extend to an algebra $\scr A$.
We assume that the Cartan matrix of $\mathfrak g$ is
of co-rank at most $1$ and symmetrisable, so that, for some non-zero numbers $d_i$
(which we take to be fixed),
multiplication from the left with $\mathrm{diag}(d_i)$
yields a symmetric matrix ($i=1,\ldots,r$).

We extend the derived algebra $\mathfrak{g}'=\llbracket \mathfrak{g},\mathfrak{g}\rrbracket$
of $\mathfrak{g}$ to $\mathscr B_0=\mathfrak{g}'\oplus \mathbb{K}$
by adding a basis element $h_0$ such that
$\llbracket h_0,e_i\rrbracket=-d_i\lambda_ie_i$ and
$\llbracket h_0,f_i\rrbracket=d_i\lambda_i f_i$.
Then there is a non-zero linear combination 
$L$ of $h_0,h_1,\ldots,h_r$, unique up to normalisation,
with $\llbracket L,\mathfrak{g}\rrbracket=0$.

Let ${\mathscr B}_{\pm1}$ be modules over $\mathfrak{g}'$, where ${\mathscr B}_{-1}\simeq {R(\lambda)}$
with highest-weight vector  $q$, and ${\mathscr B}_{1}\simeq \overline{R(\lambda)}$
with lowest-weight vector $p$. We extend ${\mathscr B}_{\pm1}$ to modules over 
$\mathscr B_0$ by $h_0 \cdot p = h_0 \cdot q=0$. 
We can then fix the normalisation of $L$
such that $L\cdot p=p$ and $L\cdot q=-q$.

\subsection{Construction of the algebra $\mathscr C$}
Let $\mathscr C$ be the 
$\mathbb{Z}$-graded superalgebra generated by $\mathscr B_0,\mathscr B_{\pm1}$, where 
$\deg{\scr B_k}=k$ for $k=0,\pm1$, modulo the relations
\begin{align}
pq&=1+L-h_0\,,& [x_0,y_0]&=\llbracket x_0,y_0 \rrbracket\,,\nonumber\\
qp&=-L+h_0\,,  & [x_0,y_{\pm1}]&= x_0 \cdot y_{\pm1} \label{relations}
\end{align}
and
$(xy)z=x(yz)$
for all $x,y\in \mathscr C$ such that $x,y \in \bigoplus_{k=0}^\infty{\scr C}_{\pm k}$
or $y,z \in \bigoplus_{k=0}^\infty{\scr C}_{\pm k}$.
Thus $\mathscr C$ is obtained from the free (non-associative) algebra $\scr F$
generated by $\mathscr B_0,\mathscr B_{\pm1}$ by factoring out the ideal generated by
all such relations.
The definition does not guarantee that the ideal is proper, so \textit{a priori} we do not know
whether $\mathscr B_0,\mathscr B_{\pm1}$ are embedded as subspaces.
We will now show that this is indeed the case.

Let $\scr F_\pm$ be the free $\mathbb{Z}$-graded
Lie superalgebra generated by the odd subspace $\scr F_{\pm1}=\scr B_{\pm1}$.
It can naturally be extended to a 
$\mathbb{Z}$-graded
Lie superalgebra $\scr B_0 \oplus \scr F_{\pm}$ (direct sum of subspaces),
where $\scr B_0$ and $\scr F_{\pm}$ are subalgebras and the 
adjoint action of $\scr B_0$ on $\scr F_{\pm1}$
is the action of $\scr B_0$ that $\scr F_{\pm1}=\scr B_{\pm1}$ is equipped with as a
$\scr B_0$-module. This action is then extended from $\scr F_{\pm1}$ to the whole of
$\scr F_{\pm}$ by the Jacobi identity.
Let $\bigoplus_{k=0}^\infty\scr U_{\pm k} $ be the universal enveloping algebra of $\scr B_0 \oplus \scr F_{\pm}$,
where $\scr U_0$ is the universal enveloping algebra of $\scr B_0$. Set $\scr U=\scr U_-\oplus \scr U_0\oplus \scr U_+$.
This is a vector space but not an algebra, since no product is defined for pairs
of elements where one has negative degree and the other positive degree.

Let $\langle -,-\rangle: \scr B_{1}\otimes \scr B_{-1} \to \mathbb{K}$ and
$\llbracket -,-\rrbracket: \scr B_{1}\otimes \scr B_{-1} \to \scr B_0$ be bilinear maps recursively given by
\begin{align}
\langle p,q\rangle&=1\,, &\langle z_0\cdot x_1,y_{-1} \rangle + \langle x_1,z_0\cdot y_{-1}\rangle&=0\,,
 \nonumber\\ \llbracket p,q\rrbracket&=h_0\,,
&\llbracket z_0\cdot x_1,y_{-1} \rrbracket + \llbracket x_1,z_0\cdot y_{-1}\rrbracket
&=\llbracket z_0, \llbracket x_1,y_{-1}\rrbracket\rrbracket\,.
\end{align}

\begin{theorem}
There is an isomorphism between $\mathbb{Z}$-graded vector spaces
from $\scr C$ to $\scr U$. The restriction to $\scr C_0 \oplus \scr C_\pm$ is
an isomorphism between associative algebras.
Moreover, the products of $x_1\in\scr B_1 \subset \scr U_1$ and $y_{-1}\in\scr B_{-1} \subset \scr U_{-1}$
are given by
\begin{align}
x_1y_{-1}&=1+\langle  x_1, y_{-1} \rangle k-\llbracket x_1,y_{-1}\rrbracket\,,
\nonumber\\
y_{-1}x_1&=-\langle  x_1, y_{-1}\rangle k+\llbracket x_1,y_{-1}\rrbracket\,. \label{products}
\end{align}
\end{theorem}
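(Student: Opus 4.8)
The plan is to prove the theorem by a PBW-type argument that exploits the fact that the partial associativity built into the relations turns the two \emph{triangular} halves of $\mathscr{C}$ into honest universal enveloping algebras, so that only the crossing products remain to be pinned down. I would first record the elementary consequences of the base data: comparing $pq=1+L-h_0$ with $\langle p,q\rangle=1$ and $\llbracket p,q\rrbracket=h_0$ forces the central element $k$ in (\ref{products}) to be $L$, and by Schur's lemma with the normalisation $L\cdot p=p$, $L\cdot q=-q$ the element $L$ acts as $\pm1$ on $\mathscr{B}_{\pm1}$ and commutes with $\mathscr{B}_0$, so that $1$ and $L$ span a trivial $\mathscr{B}_0$-submodule of $\mathscr{U}_0=U(\mathscr{B}_0)$.

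Next I would treat the nonnegative part $\mathscr{C}_{\geq 0}=\mathscr{C}_0\oplus\mathscr{C}_+$. Every factor there has nonnegative degree, so the hypothesis of the associativity relation is always satisfied and $\mathscr{C}_{\geq 0}$ is an associative subalgebra. It is generated by $\mathscr{B}_0$ and $\mathscr{B}_1$ subject only to $[x_0,y_0]=\llbracket x_0,y_0\rrbracket$ and $[x_0,y_1]=x_0\cdot y_1$ — the relations $pq,qp$ involve the degree $-1$ space $\mathscr{B}_{-1}$ and never enter, and there is no relation among products of $\mathscr{B}_1$ with itself. By the universal property of the enveloping algebra this produces a surjection $\mathscr{U}_0\oplus\mathscr{U}_+=U(\mathscr{B}_0\oplus\mathscr{F}_+)\twoheadrightarrow\mathscr{C}_{\geq 0}$, and symmetrically $\mathscr{U}_0\oplus\mathscr{U}_-\twoheadrightarrow\mathscr{C}_{\leq 0}$. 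To analyse the crossing products $x_1y_{-1},y_{-1}x_1\in\mathscr{C}_0$ I would use mixed associativity: regrouping $z_0(x_1y_{-1})=(z_0x_1)y_{-1}$ is legitimate since $z_0,x_1$ have nonnegative degree, while $(x_1y_{-1})z_0=x_1(y_{-1}z_0)$ is legitimate since $y_{-1},z_0$ have nonpositive degree; combining these with $[z_0,-]=z_0\cdot(-)$ yields the Leibniz rule $z_0\cdot(x_1y_{-1})=(z_0\cdot x_1)y_{-1}+x_1(z_0\cdot y_{-1})$, so that the product is a map of $\mathscr{B}_0$-modules $\mathscr{B}_1\otimes\mathscr{B}_{-1}\to\mathscr{C}_0\cong\mathscr{U}_0$. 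Since $\mathscr{B}_{\pm1}$ are generated from $p,q$ under $\mathscr{B}_0$, and $1,L$ are invariant while $\llbracket x_1,y_{-1}\rrbracket$ transforms adjointly, equivariance shows the product is the unique $\mathscr{B}_0$-equivariant extension of the base relations; identifying $\langle-,-\rangle$ with the canonical invariant pairing $\overline{R(\lambda)}\otimes R(\lambda)\to\mathbb{K}$ and $\llbracket-,-\rrbracket$ with the equivariant moment map $\overline{R(\lambda)}\otimes R(\lambda)\to\mathscr{B}_0$ (both existing and, after the fixed normalisation, unique by the representation theory of $(\mathfrak{g},\lambda)$, and obeying the stated recursion), the crossing products are forced to coincide with (\ref{products}). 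The same relations furnish a reduction: an adjacent crossing can be contracted via (\ref{products}) into degree-$0$ terms with strictly fewer $\mathscr{B}_{\pm1}$ letters, and mixed associativity always exposes such a pair, so iterating brings every element of $\mathscr{C}_k$ into the image of $\mathscr{U}_k$, giving $\dim\mathscr{C}_k\leq\dim\mathscr{U}_k$ in each degree.

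The remaining and hardest step is the reverse inequality, equivalently the injectivity of the two surjections above — this is exactly the assertion, flagged before the theorem, that the defining ideal is proper and $\mathscr{B}_0,\mathscr{B}_{\pm1}$ embed. I would establish it by an explicit faithful action: let the associative algebra $\mathscr{M}=\mathscr{U}_0\oplus\mathscr{U}_+$ act on itself by left multiplication, and for each $y_{-1}\in\mathscr{B}_{-1}$ define a degree $-1$ operator $\rho(y_{-1})$ recursively, by induction on degree, dictated by the operator form of the crossing relations (\ref{products}). If this makes $\mathscr{M}$ a $\mathscr{C}$-module, then its cyclic vector $1$ satisfies $u\cdot 1=u$ for $u\in\mathscr{U}_0\oplus\mathscr{U}_+$, so these elements act as linearly independent operators and $\dim\mathscr{C}_k\geq\dim\mathscr{U}_k$ for $k\geq0$; the mirror module on $\mathscr{U}_0\oplus\mathscr{U}_-$ covers $k\leq0$. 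Combined with the upper bound this gives $\dim\mathscr{C}_k=\dim\mathscr{U}_k$ for all $k$, whence the surjections are isomorphisms, and assembling them yields the graded vector space isomorphism $\mathscr{C}\cong\mathscr{U}$, restricting to the two associative-algebra isomorphisms on $\mathscr{C}_0\oplus\mathscr{C}_\pm$ with crossing products (\ref{products}).

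I expect the main obstacle to be the consistency of this last construction: one must verify that the operators $\rho(y_{-1})$ are well defined, independently of how an element of $\mathscr{B}_{-1}$ is expressed through $q$ and the $\mathscr{B}_0$-action, and that together with the left multiplications they satisfy \emph{all} defining relations of $\mathscr{C}$ — the associativity constraints and the crossing relations simultaneously. This is a diamond/PBW-type coherence check, and it is the precise point where the well-definedness of the recursively defined maps $\langle-,-\rangle$ and $\llbracket-,-\rrbracket$ on the modules $\overline{R(\lambda)}$ and $R(\lambda)$ is really used; everything else is bookkeeping with the triangular enveloping algebras and the equivariance already established.
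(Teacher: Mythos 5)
Your proof skeleton is in fact the same as the paper's: compare $\mathscr{C}$ with $\mathscr{U}$ via surjections coming from universal properties of enveloping algebras, and close the argument with a faithfulness statement. Your preliminary steps are correct and match the paper — the identification $k=L$, the Leibniz rule from mixed associativity, equivariance forcing (\ref{products}), and the contraction argument showing that every element of nonnegative degree is a product of elements of $\mathscr{B}_0$ and $\mathscr{B}_1$ (this is what the paper compresses into the assertion that $\bigoplus_{k\geq 0}\mathscr{C}_{\pm k}$ is an enveloping algebra of $\mathscr{B}_0\oplus\mathscr{F}_\pm$). The genuine gap is the step you yourself flag as the ``main obstacle'': the consistency of the recursively defined operators $\rho(y_{-1})$ together with the left multiplications, i.e.\ that your candidate space really carries an action satisfying all the defining relations. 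That coherence statement \emph{is} the mathematical content of the theorem, and you do not prove it — you only predict that it holds. The paper does not prove it from scratch either: it imports it from \cite{Cederwall:2023stz} (see also \cite{Cederwall:2022oyb}), where the associative product on $\mathscr{U}_0\oplus\mathscr{U}_\pm$ is shown to extend to all of $\mathscr{U}_-\oplus\mathscr{U}_0\oplus\mathscr{U}_+$ so that (\ref{relations}), (\ref{products}) and restricted associativity hold; once that algebra exists, the surjection $\psi\colon\mathscr{C}\to\mathscr{U}$ and the universal-property argument (which you also have) finish the proof. Your $\rho$-construction is an equivalent repackaging of that cited result, not a shortcut around it: such consistency is sensitive to the precise coefficients in the relations $pq=1+L-h_0$, $qp=-L+h_0$, and nothing in your write-up rules out that the recursion collapses part of $\mathscr{U}_0\oplus\mathscr{U}_+$.

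If you want to complete your route, two points must be made precise. First, $\mathscr{C}$ is not associative (the paper ends by computing nonvanishing associators), so a multiplicative map $\rho\colon\mathscr{C}\to\mathrm{End}(M)$ necessarily kills all associators; this is harmless for your purpose, since you only need faithfulness on $\mathscr{C}_0\oplus\mathscr{C}_\pm$ where associativity holds, but it means ``makes $M$ a $\mathscr{C}$-module'' has to be interpreted as: define $\rho$ multiplicatively on the free algebra and verify that it annihilates the generating relations (\ref{relations}) — the associativity-type relations are then automatic in $\mathrm{End}(M)$. Second, defining $\rho(y_{-1})$ for general $y_{-1}$ by $\mathscr{B}_0$-equivariance from $\rho(q)$ requires a well-definedness argument: whenever $\sum_i u_i\cdot q=0$ in $\mathscr{B}_{-1}$ (with $u_i$ in the enveloping algebra of $\mathscr{B}_0$), the corresponding operator combination must vanish; and the crossing relations must hold as operator identities, $\rho(p)\rho(q)=\mathrm{id}+\rho(L)-\rho(h_0)$ and $\rho(q)\rho(p)=-\rho(L)+\rho(h_0)$, on all of $\mathscr{U}_0\oplus\mathscr{U}_+$, not merely on the cyclic vector. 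Carrying out these verifications is exactly the PBW-type theorem that the paper cites; until they are done, the lower bound $\dim\mathscr{C}_k\geq\dim\mathscr{U}_k$, and hence the theorem, is not established.
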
 
\begin{proof}
In refs. \cite{Cederwall:2023stz} (see also ref. \cite{Cederwall:2022oyb}),
it was shown that 
the associative product on $\scr U_0 \oplus \scr U_{\pm}$ can be extended to a product
defined on the whole of 
$\scr U_- \oplus \scr U_0 \oplus \scr U_+$
such that
(\ref{products}) and (\ref{relations}) hold,
as well as $(xy)z=x(yz)$
for all $x,y\in \mathscr U$ such that $x,y \in \bigoplus_{k=0}^\infty{\scr U}_{\pm k}$
or $y,z \in \bigoplus_{k=0}^\infty{\scr U}_{\pm k}$.
Thus there is a surjective algebra homomorphism $\psi$
from $\scr C$ to $\scr U$. In order to show that it is injective, we note that
$\bigoplus_{k=0}^\infty{\scr C}_{\pm k}$ is an enveloping algebra of $\scr B_0 \oplus \scr F_{\pm}$,
which means that the restriction of $\psi$ to $\bigoplus_{k=0}^\infty{\scr C}_{\pm k}$ is a surjective
homomorphism between enveloping algebras of $\scr B_0 \oplus \scr F_{\pm}$. It then follows from
the universal property of $\bigoplus_{k=0}^\infty{\scr U}_{\pm k}$ that this homomorphism is also injective.
\end{proof}

The algebra $\scr C$ is in general not associative, and 
does not even satisfy the weaker condition that commutator satisfies the Jacobi identity
(in other words, it is not \textit{Lie admissible}), so we do not get a Lie superalgebra
directly from it as its commutator algebra.
However, since the associative law $(xy)z=x(yz)$
holds whenever $x,y \in \bigoplus_{k=0}^\infty{\scr C}_{\pm k}$
or $y,z \in \bigoplus_{k=0}^\infty{\scr C}_{\pm k}$, the subspace
${\mathscr C}_{-1} \oplus {\mathscr C}_0 \oplus {\mathscr C}_{1}$ forms what is called a {\it local}
Lie superalgebra, and from it, we can get a Lie superalgebra \cite{kacSimpleIrreducibleGraded1968}.
Also the subspace $\scr B_{-1}\oplus \scr B_0 \oplus \scr B_1$ forms a local Lie superalgebra
with respect to the bracket $\llbracket -,-\rrbracket$.
This is the local part of a 
Borcherds superalgebra with a Cartan matrix obtained by 
adding a row and column $0$ to the Cartan matrix of $\fg$,
where the entries are given by
\begin{align}
\cms_{00}&=0\,, &\cms_{0i}&=-d_i\lambda_i\,, & \cms_{i0}&=-\lambda_i\,.
\end{align}
The corresponding Dynkin diagram in the case $(\mathfrak{g},\lambda)=(\mathfrak{e}_r,\Lambda_1)$
is given in Figure~\ref{dynkinfig} (with the ``grey'' vertex corresponding to row and column $0$).

Let $\bigoplus_{k=0}^\infty \scr T_{\pm k}$ be the tensor algebra of $\scr B_{\pm 1}$,
where $\scr T_0=\mathbb{K}$ and $\scr T_{\pm1}=\scr B_{\pm 1}$. It can be considered as the
universal enveloping algebra of $\scr F(\scr B_{\pm 1})$, and thus as a subalgebra of
$\bigoplus_{k=0}^\infty \scr U_{\pm k}$, which we identify with
$\bigoplus_{k=0}^\infty \scr C_{\pm k}$ by the above isomorphism $\psi$.
Then the vector space
$\scr T= \scr T_-\oplus \scr T \oplus \scr T_+$ is a subspace of $\scr C$. More precisely,
$\scr C_k = \scr U_0 \scr T_k = \scr T_k \scr U_0 $ for any $k\in\mathbb{Z}$.

\subsection{Extension to $\mathscr A$}

We extend $\mathscr C$ to a superalgebra $\mathscr A$ by adding two subspaces $\widetilde{\mathscr B}_{\pm1}$
with $\deg{\widetilde{\mathscr B}_{\pm1}}=0$,
which are $\mathfrak{g'}$-modules
isomorphic to $\scr B_{\pm1}$, but with opposite parity as subspaces of $\mathscr A$
(as such they are even, whereas
$\scr B_{\pm1}$ are odd).
Let $\varphi$ be a $\mathfrak{g}'$-module isomorphism:
\begin{align}
\varphi : \mathscr B_{\pm1} \to \widetilde{\mathscr B}_{\pm1}\,,\quad \varphi(y_{\pm1})=
\widetilde y_{\pm1}\,,\quad \varphi(x_0\cdot y_{\pm1})=x_0\cdot\varphi(y_{\pm1})=x_0\cdot\widetilde y_{\pm1}\,.
\end{align}
Thus $\widetilde {\mathscr B}_{-1}$ has a highest-weight vector $\widetilde q$
and $\widetilde{\mathscr B}_1$ has a lowest-weight vector $\widetilde p$.

Let $\mathscr A$ be the algebra generated by
$\mathscr C$ and
$\widetilde{\mathscr B}_{\pm1}$
modulo the relations
\begin{align}
[\mathscr C,\widetilde {\mathscr B}_{\pm1}]=[\widetilde {\mathscr B}_{\pm1},\widetilde {\mathscr B}_{\pm1}]&=0\,,&
[\widetilde x_1,\widetilde y_{-1}]&=\langle x,y\rangle\,.
\end{align}

It follows that the subalgebra
$\mathscr W$ of $\mathscr A$ generated by $\widetilde{\mathscr{B}}_{\pm1}$ is
an even Weyl algebra with $2\,\dim{R(\lambda)}$ generators,
and since it commutes with the subalgebra $\mathscr C$,
the whole algebra $\mathscr A$ is the tensor product of the subalgebras
$\mathscr W$ and $\mathscr C$.

\subsection{The map $d$} Consider the subspace $\mathscr W \mathscr T$ of $\scr A$. We define a linear map 
$d: \mathscr W \mathscr T
\to \mathscr A$
in three steps. First,
on $\mathscr B_{\pm1}$ and $\widetilde{\mathscr B}_{\pm1}$ it acts by
\begin{align}
d\widetilde x_{-1}&=x_{-1}\,, & dx_{1}&=\widetilde x_{1}\,, & d\widetilde x_1 = d x_{-1}&=0\,.
\end{align}
Second, on each of the subspaces $\mathscr W$ and 
$\mathscr T$,
it acts as an odd derivation.
Third, on a product $xy$, where $x\in\mathscr W$ and $y \in \mathscr T$, it acts by the Leibniz rule 
\begin{align}
d(xy)=(dx)y+x(dy)\,.
\end{align}

Let $\scr S$ be the even commutative subalgebra of $\scr W$ (and thus of $\scr A$) generated by 
$\widetilde {\scr B}_{-1}$.
We can now define a generalised vector field as an element in 
$\mathscr S\mathscr B_1 \subset \mathscr A$ and the generalised Lie derivative of a vector
field $V$ with respect to another one $U$ as $\mathscr L_U V =[dU,V]$.
This reproduces the known expressions in exceptional geometry, as we will see
in the next section.

\section{Explicit calculations in the case of finite $\mathfrak{g}$}\label{expl-section}

We will now compare the definition $\mathscr L_U V =[dU,V]$ of the generalised Lie derivative
with the established definition 
in the cases of extended geometry
where $\mathfrak{g}$ is finite-dimensional (which implies that the modules $\scr B_{\pm1}$ are finite-dimensional too).
The latter is motivated by the
requirement that it (in the case of exceptional geometry) give a
unification of
ordinary diffeomorphisms with gauge transformations
in supergravity \cite{Hull:2007zu,PiresPacheco:2008qik,Coimbra:2011ky,Berman:2012vc,Hohm:2013vpa}. In this definition,
the transformations of the components can be written \cite{Berman:2012vc}
\begin{align}
\mathscr L_U V^M &= L_U V^M + Y^{MN}{}_{PQ}\partial_N U^P V^Q\,\nonumber\\
&= U^N\partial_N V^M + Z^{MN}{}_{PQ}\partial_N U^P V^Q\, \label{genLieder}
\end{align}
with indices $M,N,\ldots=1,2,\ldots,\dim{\scr B_{\pm1}}$,
where $Y^{MN}{}_{PQ}$ and 
\begin{align}
Z^{MN}{}_{PQ}=Y^{MN}{}_{PQ}-\de_{P}{}^{M}\de_{Q}{}^N
\end{align}
are $\mathfrak{g}$-invariant tensors, which we now will give in an explicit
$\mathfrak{g}$-covariant form. 

Let $T_\alpha$, where $\alpha=1,\ldots,\dim{\mathfrak{g}}$, be basis elements of $\fg$.
Let $E_M$ and $F^N$ be basis elements of $\scr B_{1}$ and $\scr B_{-1}$, respectively,
such that $\langle E_M,F^N \rangle=\delta_M{}^N$
and let $t_\alpha{}_M{}^N$ be the entries of the corresponding representation matrices, meaning
\begin{align}
T_\alpha \cdot E_M &=-t_\alpha{}_M{}^N E_N\,, &
T_\alpha \cdot F^N &=t_\alpha{}_M{}^N F^M\,.
\end{align}

Let $\kappa$ be the invariant symmetric bilinear form on $\fg$
uniquely given by the condition $\kappa(e_i,f_j)=\de_{ij}/d_j$.
Then there is a vector space isomorphism $v:\mathfrak{h} \to \mathfrak{h}^\ast$
given by $v(h_i)(h_j)=\kappa(h_i,h_j)$ and a corresponding inner product on
$\mathfrak{h}^\ast$ given by $(\mu,\nu)=\big(v^{-1}(\mu),v^{-1}(\nu)\big)$.
Set $\kappa_{\alpha\beta}=\kappa(T_\alpha,T_\beta)$, and define $\kappa^{\alpha\beta}$ by
$\kappa_{\alpha\gamma}\kappa^{\gamma\beta}=\delta_\alpha{}^\beta$.

In all cases of exceptional geometry, we now have
\begin{align}
Z^{MN}{}_{PQ}&=-t_\alpha{}_Q{}^Mt_\beta{}_N{}^P \kappa^{\alpha\beta}
+\big((\lambda,\lambda)-1\big)\delta_Q{}^M \delta{}_N{}^P\,,
\end{align}
and it is then natural to take (\ref{genLieder}) with this choice of $Z^{MN}{}_{PQ}$
as the definition of the generalised Lie derivative in
general cases of extended geometry \cite{Berman:2012vc,Palmkvist:2015dea,Bossard:2017aae,Cederwall:2017fjm}.

The left pair of relations in (\ref{relations}) can now be covariantised to
\begin{align}
E_MF^N &= -t^\alpha{}_M{}^N T_\alpha - (\lambda,\lambda) \delta_M{}^N L +\delta_M{}^N L+ \delta_M{}^N\,,\nn\\
F^NE_M &= t^\alpha{}_M{}^N T_\alpha + (\lambda,\lambda) \delta_M{}^N L-\delta_M{}^N L\,.
\end{align}
This gives
\begin{align}
-[E_NF^M,F^P]=[F^ME_N,F^P]&=
-Z^{MP}{}_{QN}F^Q\,,\nn\\
-[E_NF^M,E_Q]=[F^ME_N,E_Q]&=
Z^{MP}{}_{QN}E_P\,\nn
\end{align}
and
\begin{align}
[dU,V]=[d(U^ME_M),V^NE_N]&=[dU^ME_M,V^NE_N]+[U^M dE_M,V^NE_N]\nn\\
&=[\partial_PU^M F^PE_M,V^NE_N]+[U^M \widetilde{E}_M,V^NE_N]\nn\\
&=\partial_PU^MV^N [F^PE_M,E_N]+U^M[\widetilde{E}_M, V^N] E_N\nn\\
&=\partial_PU^MV^N Z^{PQ}{}_{NM}E_Q+U^M\partial_MV^N E_N\nn\\
&=\partial_PU^MV^N Z^{QP}{}_{MN}E_Q+U^M\partial_MV^Q E_Q\nn\\
&=(U^M\partial_MV^Q+Z^{QP}{}_{MN}\partial_PU^MV^N)E_Q\,,
\end{align}
which shows that we indeed get back the generalised Lie derivative from the expression $[dU,V]$.

Let us 
compute the associators $(F^M,E_N,F^P)$ and $(E_M,F^N,E_P)$. 
We get
\begin{align}
(F^ME_N)F^P-F^M(E_NF^P)&=(F^ME_N)F^P+ F^MF^PE_N -\de_N{}^P F^M\nn\\
&=[F^ME_N,F^P]+F^PF^ME_N+ F^MF^PE_N -\de_N{}^P F^M\nn\\
&=(-Z^{MP}{}_{QN}-\de_N{}^P\de_Q{}^M)F^Q+2F^{(P}F^{M)}E_N \nn\\
&=-Y^{MP}{}_{QN}F^Q+2F^{(P}F^{M)}E_N
\end{align}
and
\begin{align}
(E_NF^M)E_Q-E_N(F^ME_Q)&=(E_NF^M)E_Q+ E_NE_QF^M -\de_Q{}^M E_N\nn\\
&=[E_NF^M,E_Q]+E_QE_NF^M+ E_NE_QF^M -\de_Q{}^M E_N\nn\\
&=(-Z^{MP}{}_{QN}-\de_N{}^P\de_Q{}^M)E_P+2E_{(Q}E_{N)}F^M \nn\\
&=-Y^{MP}{}_{QN}E_P+2E_{(Q}E_{N)}F^M\,,
\end{align}
or alternatively
\begin{align}
(F^ME_N)F^P-F^M(E_NF^P)&=
-F^M(E_NF^P)+\de_N{}^M F^P-E_NF^MF^P\nn\\
&=[E_NF^P,F^M]-E_NF^PF^M+\de_N{}^M F^P-E_NF^MF^P\nn\\
&=Z^{PM}{}_{QN}F^Q+\de_N{}^M \de_Q{}^P F^Q-2E_NF^{(P}F^{M)}\nn\\
&=Y^{PM}{}_{QN}F^Q-2E_NF^{(P}F^{M)}
\end{align}
and
\begin{align}
(E_NF^M)E_Q-E_N(F^ME_Q)&=-F^ME_NE_Q+\de_N{}^M E_Q-E_N(F^ME_Q)\nn\\
&=[F^ME_Q,E_N]-F^ME_QE_N-F^ME_NE_Q+\de_N{}^M E_Q\nn\\
&=(Z^{MP}{}_{NQ}+\de_N{}^M\de_Q{}^P)E_P-2F^ME_{(Q}E_{N)}\nn\\
&=Y^{MP}{}_{NQ}E_P-2F^ME_{(Q}E_{N)}\,
\end{align}
(where parentheses around indices denote symmetrisation).
This shows that the algebra $\scr A$ indeed is not associative.
In exceptional geometry, $(\fg,\lambda)=(\mathfrak{e}_r,\Lambda_1)$,
the $Y$-tensor is symmetric in its lower (and upper) indices up to $r=6$, so the antisymmetric part of
the associator still vanishes. However, for $r\geq7$, there is also an antisymmetric part of the
$Y$-tensor (which for $r=7$ factorises into a product of an invariant symplectic form and its inverse).

Only in the case $(\fg,\lambda)=(\mathfrak{a}_r,\Lambda_1)$, the term involving the
$Y$-tensor vanishes, but
still in this case, the second term remains. However, in this case the ideal
generated by $E_{(M}E_{N)}$ and $F^{(M}F^{N)}$ is proper and can be factored out, so that
the resulting algebra is the Weyl superalgebra $A$
that we started with.

\section{Outlook}

We end this note with some concluding remarks. We have restricted ourselves to the case of finite-dimensional
$\mathfrak{g}$ in Section~\ref{expl-section}, since the calculations there involve summations over all the 
$\dim{\,R(\lambda)}$ values of the indices (implicitly whenever indices are repeated up and down).
The expression $[dU,V]$ for the generalised Lie derivative $\mathscr{L}_U V$ makes sense also in the
infinite-dimensional cases.

Although we do not see 
any potential issues of the behaviour of the map $d$ in infinite-dimensional cases,
other issues will probably arise in further investigations.
In exceptional geometry, the relation $\mathscr{L}_U V=-\mathscr{L}_V U$ fails to hold already in the
last finite-dimensional case, $r=8$: the commutator of two generalised Lie derivatives gives not only another
generalised Lie derivative, but also an ``ancillary'' $\mathfrak{e}_r$-transformation.
In ref.~\cite{Cederwall:2017fjm}, we specified
in which cases of extended geometry ancillary transformations occur, and in ref.~\cite{Cederwall:2019qnw} we showed
how these occurrences can be derived from the structure of a corresponding \textit{tensor hierarchy algebra}
\cite{Palmkvist:2013vya,Bossard:2017wxl,Carbone:2018njd}.
In ref.~\cite{Cederwall:2022oyb,Cederwall:2023stz},
we showed how tensor hierarchy algebras can be constructed from the algebra $\mathscr{C}$
that we have considered here. However, this construction (called \textit{cartanification} in
ref.~\cite{Cederwall:2023stz}) only gives the desired result
in cases where ancillary transformation do not occur. The issues that arise beyond these cases where discussed
in the end of ref.~\cite{Cederwall:2023stz}, and it is an open question how this construction should be generalised.

\bibliographystyle{amsalpha}



\end{document}